\def\ds{\displaystyle}
\def\Join{\hat{\operatorname{J}}}
\def\be{\begin{equation}}
\def\ee{\end{equation}}
\def\ben{\begin{eqnarray}}
\def\een{\end{eqnarray}}
\def\Sti{st_{\rm{ind}}}
\def\Stc{st^j_{\rm{cf}}}
\def\Sts{st^j_{\rm{sg}}}
\def\D{\mathcal{D}}
\def\R{\mathbb{R}}
\def\N{\mathbb{N}}
\def\V{\mathbb{V}}
\def\Re{\operatorname{Re}}
\def\Im{\operatorname{Im}}
\def\im{\operatorname{\imath}}
\def\qs{q^\star}
\def\jq{j^\star_{q}}
\def\til{\tilde}
\def\vR{\mathbf{R}}
\def\vd{\mathbf{d}}
\def\vs{\mathbf{s}}
\def\vst{\mathbf{\til{s}}}
\def\st{\til{s}}
\def\ellt{\til{\ell}}
\def\vc{\mathbf{c}}
\def\vct{\mathbf{\til{c}}}
\def\ctq{\til{c}^\Delta}
\def\ct{\til{c}}
\def\vf{\mathbf{f}}
\def\vfr{\mathbf{f}^{\rm{r}}}
\def\vfk{\mathbf{f}^{\kq}}
\def\vF{\mathbf{F}}
\def\vFr{\mathbf{F}^{\rm{r}}}
\def\vg{\mathbf{g}}
\def\vh{\mathbf{h}}
\def\vy{\mathbf{y}}
\def\vB{\mathbf{b}}
\def\vW{\mathbf{w}}
\def\vWt{\til{\mathbf{w}}}
\def\vR{\mathbf{R}}
\def\vr{\mathbf{r}}
\def\vd{\mathbf{d}}
\def\op{\hat{P}}
\def\kq{k_q}
\def\jq{q,j}
\def\Nq{N_b}
\def\jL{j=1,\ldots,L}
\def\qQ{q=1,\ldots,Q}
\newcommand{\la}{\langle}
\newcommand{\ra}{\rangle}
\newtheorem{corollary}{Corollary}
\newtheorem{definition}{Definition}
\newtheorem{theorem}{Theorem}
\def\qQ{q=1,\ldots,Q}
\def\jL{j=1,\ldots,L}
\newcommand{\Spann}{{\mbox{\rm{span}}}}
\title{Trigonometric dictionary based codec 
for music compression with high quality recovery}
\author{Laura Rebollo-Neira\\%\footnotemark[2]}
\begin{document}
\maketitle 
\baselineskip = 2\baselineskip
\begin{abstract}
A codec for compression of music signals is proposed.
The method belongs to the class of transform 
lossy compression. It is conceived 
to be applied in the high quality recovery range though.
The transformation, endowing  the codec with 
its distinctive feature, 
relies on the ability to construct high quality
 sparse approximation of music
signals. This is achieved 
by a redundant trigonometric
dictionary and a dedicated pursuit strategy. 
The potential of the approach is illustrated 
by comparison with the OGG  Vorbis format, on a 
sample consisting of clips of melodic music. The 
comparison evidences remarkable improvements in 
compression performance for the identical 
quality of the decompressed signal.
\end{abstract}
{\bf{Keywords:}}
Music Compression, 
Hierarchized Block Wise Multichannel Optimized 
Orthogonal Matching Pursuit.
Trigonometric Dictionaries.
\section{Introduction}
For the most part the techniques for compressing 
high fidelity music have been developed within the lossless 
compression framework
\cite{HS01,LR04,YRL06,HFH08,GT08,NZ11,ABG12,GT13}.
Because lossless music compression algorithms 
are reversible, 
which implies that can reproduce the original signal  
when decompressing the file, the efficiency of 
those algorithms are compared on the reduction of file size
and speed or the process.
Conversely, lossy compression introduces irreversible
loss and should be compared also taking into 
account the quality of the decompressed data. 

This work focusses on lossy compression of music signals
with {\em{high quality recovery}}. This means 
that the recovered signal should be very similar to 
the original one, with respect to the 
Euclidean distance of the data points. 
In other words, the recovered 
signal should yield a high Signal to Noise 
Ratio (SNR). 
The proposed approach is based on the ability to  
construct a high quality sparse representation of  
a piece of music. The sparsity is 
achieved by selecting elements from a redundant 
trigonometric dictionary, through a 
 dedicated greedy pursuit methodology which 
approximates simultaneously all the channels of a 
stereo signal. Pursuit strategies for  
approximating multiple signals sharing the same
sparsity structure are refereed in the  
literature to as  several names: Vector greedy algorithms
\cite{LT03,LT06}, simultaneous greedy approximations 
\cite{LT06, TGS06}, and multiple measurement vectors (MMV). 
\cite{CRE05,CH06}. 
Following previous work \cite{RNMB13,LRN15}, 
we dedicate greedy methodologies for simultaneous 
representation to approximate a 
{\em{partitioned multichannel 
music signal}} subjected to a global constraint on 
the sparsity.

The use of redundant dictionaries for 
constructing sparse representations is known to
 be a successful approach in a variety 
of signal processing 
applications 
\cite{PAB06,FVP06,MES08,WYG09,Ela10,WMM10,YUW10,PBD10}.
 In particular for compression of facial images
\cite{BE08,ZGK11,RCE14}.
This paper extends the range of 
successful applications 
by presenting a number of examples were the proposed 
dictionary based codec 
for compression of music signals 
yields remarkable results, in relation to file size
and quality of the recovered signal.
\subsection{Paper contributions}
The central aim of the paper is to 
produce a proof of concept of the proposed 
codec.
The proposal falls within the usual
transform coding scheme. It consists of three
main steps:
\begin{itemize}
\item
[i)] Transformation of the signal.
\item
[ii)] Quantization of the transformed data.
\item
[iii)] Bit-stream entropy coding.
\end{itemize}
However, we move away from the traditional 
compression techniques at the very beginning.
Instead of considering an orthogonal transformation,
the first step is realized by approximating 
the signal using a trigonometric redundant dictionary. 
In a previous work \cite{LRN15}, the dictionary
has been proven to yield  stunning 
 sparse approximation of melodic music, 
if processed by the adequate greedy
strategy. We demonstrate now that the sparsity  
 renders compression.

The Hierarchized Block Wise
Optimized Orthogonal Matching Pursuit (HBW-OOMP) method
in \cite{LRN15} is generalized here, to consider 
the simultaneous approximation
of multichannel signals. Within the proposed 
scheme the advantage of simultaneous 
approximation is twofold: a)It reduces the processing 
time at the transformation stage and b)It reduces 
the number of parameters to be stored, which improves 
 compression performance.

The success of the codec, designed to 
achieve high quality recovery, is illustrated
by comparison with the OGG Vorbis compression format. 
Accessing the quality of the recovered signal by the 
classic SNR, a substantial gain in 
compression, for the same quality of 
the decompressed signal,
is demonstrated on a number of clips 
of melodic music.
\subsection{Paper Organization}
Sec.~\ref{notation} introduces the notation and some 
relevant mathematical background.
Sec.~\ref{MHBW} discusses the HBW 
strategy to approximate simultaneously a 
 multichannel signal.
Sec.~\ref{cod} describes a simple compression scheme
 that benefits from the achieved sparse approximation 
of the multichannel signal.
 Sec.~\ref{NE} demonstrates the potential of the
technique by comparison with the OGG format. 
The final conclusions are presented in
 Sec.~\ref{con}
\section{Mathematical background and notational}
\label{notation}
Throughout the paper $\R$ and $\N$ stand for 
the sets of real and natural numbers, respectively.
Low boldface letters are used to indicate Euclidean 
vectors and capital boldface letters to indicate matrices. 
Their corresponding component are represented using 
standard  mathematical fonts,
e.g., $\vf \in \R^N,\, N \in \N$ is a vector of components
$f(i),\, i=1,\ldots,N$ and $\vF \in \R^{N \times L}$ is a
matrix of real entries 
$F(i,j),\,i=1,\ldots,N,\, j=1,\ldots,L$.

An $L$-channel signal is represented as a matrix 
$\vF \in \R^{N \times L}$
the columns of which are the channels, indicated as vectors
$\vf_j \in \R^N,\,j=1,\ldots,L$. Thus, a single 
channel reduces to a vector.
A partition of a multichannel signal 
$\vF \in \R^{N \times L}$ is realized by  
 a set of disjoint pieces $\vF_q \in \R^{\Nq \times L},\, 
q=1,\ldots,Q$, which for simplicity are assumed to
be all of the same size and such that $Q \Nq=N$,
i.e., for each channel it holds that 
$\vf_j = \Join_{q=1}^{Q} \vf_{\jq}$, where the 
concatenation operation $\Join$ is 
defined as follows: $\vf_j$ is a vector  
in $\R^{Q \Nq}$ having components 
$f_j(i)=f_{\jq}(i-(q-1)\Nq),\, i=(q-1)\Nq+1,\ldots,q\Nq,\,q=1,\ldots,Q$. In the adopted notation $f_j(i)$ can also be 
indicated as $F(i,j)$ and 
$f_{\jq}(i)$ as $F_q(i,j)$.
Hence $$\|\vF\|_F^2 =  \sum_{q=1}^Q\|\vF_q\|_F^2,$$ 
where each 
$\vF_q$ is a matrix consisting of the channels 
$\vf_{\jq} \in \R^{\Nq},\,j=1,\ldots,L$, as columns 
and $\| \cdot \|_F$ indicates the Frobenius norm. 
Accordingly,
  $$\| \vF_q\|_F^2 = \sum_{j=1}^L 
\| \vf_{\jq}\|^2,$$
where $\| .\|$ indicates the Euclidean norm induced 
by the Euclidean inner product $\la \cdot, \cdot \ra$.
%Notice that, because the pieces are disjoint,
%for $q \neq p$
%$\la \vF_q, \vF_p \ra_F =0$, where the operation
%$\la \cdot,\cdot \ra_F$ indicates the Frobenius inner 
%product, i.e.
%$$\la \vF_q, \vF_p \ra_F = \sum_{j=1}^L
% \la \vf_{\jq} , \vf_{\jp} \ra, $$
%with $\la \cdot , \cdot \ra$ indicating the Euclidean inner 
%product in $\R^{\Nq}$
%$$\la \vf_{\jq}, \vf_{\jp} \ra= \sum_{i=1}^{\Nq} 
%f_{\jq}(i) f_{\jp}(i).$$
%Accordingly,  $\| \vF_q\|_F = \sum_{j=1}^L 
%\| \vf_{\jq}\|^2,$ where $\|.\|_F$ indicates 
% the norm induced by the
%Frobenius inner product
%and $\| .\|$ that induced by the  Euclidean inner
%product.
\begin{definition} 
A {\em{dictionary}} for $\R^{\Nq}$ is  
 an {\em{over-complete}} set of normalized to unity 
elements 
$\D=\{\vd_n \in \R^{\Nq}\,; \| \vd_n\|=1\}_{n=1}^M,$ 
 which are called {\em{atoms}}.
\end{definition}
{\bf{Approximation assumption:}}
{\em{Given a dictionary $\D$ and a multichannel signal 
partitioned into $Q$ blocks $\vf_{\jq} \in \R^{\Nq},\,\jL,\qQ$,
as described above, the $\kq$-term approximation 
of each block is assumed to be  of the form 
\be
\label{atoq}
\vfk_{\jq}= \sum_{n=1}^{\kq}
c_{\jq}^{k_q}(n) \vd_{\ell^{q}_n},
\quad \jL,\,\qQ,
\ee
where the atoms $\vd_{\ell^{q}_n},\,n=1,\ldots,\kq$ 
are {\em{the same}} for all the channels corresponding 
to a particular block $q$.}}

Before discussing how to select the atoms in \eqref{atoq}
it is convenient to review some 
properties of an orthogonal projector. Let's start
by recalling its definition.
\begin{definition}
An operator $\op_{\V_{\kq}^q}$ is an orthogonal projection
operation onto ${\V_{\kq}^q} \subset \R^{\Nq}$ if and only if:
\begin{itemize}
\item[a)]
$\op_{\V_{\kq^q}}$ is idempotent,
i.e.,
$\op_{\V_{\kq^q}}\op_{\V_{\kq^q}} = \op_{\V_{\kq^q}}$.
\item[b)]
$\op_{\V_{\kq^q}} \vg =\vg$ if $\vg \in {\V_{\kq^q}}$ and
$\op_{\V_{\kq^q}} \vg^\perp=0$ if $\vg^\perp
\in {\V_{\kq^q}^\perp}$,
with ${\V_{\kq^q}^\perp}$
indicating the orthogonal complement
of $\V_{\kq^q}$ in $\R^{\Nq}$.
\end{itemize}
\end{definition}
The following properties will be used in the
proofs of subsequente theorems.
\begin{itemize}
\item[i)]
An orthogonal projector operator
 $\op_{\V_{\kq^q}}$ is hermitian, i.e. for all
$\vh$ and $\vg$ in $\R^{\Nq}$ it is true that
$\la \vh, \op_{\V_{\kq}^q} \vg\ra=
\la \op_{\V_{\kq}^q} \vh,  \vg\ra.$
\item[ii)]
If ${\V_{\kq+1}^q}$ is constructed as
${\V_{\kq+1}^q}= \V_{\kq}^q + \vd_{\ell_{\kq+1}^q}$,
for all $\vh \in \R^{\Nq}$ it holds that
$$\op_{\V_{\kq+1}^q} \vh= \op_{\V_{\kq}^q} \vh + 
\vW^q_{{\kq+1}} \frac{\la \vW^q_{{\kq+1}},\vh\ra}
{\|\vW^q_{{\kq+1}}\|^2},$$ with $\vW^q_{{\kq+1}}= 
\vd_{\ell_{\kq+1}^q} - \op_{\V_{\kq}^q} \vd_{\ell_{\kq+1}^q}.$
\end{itemize}
\begin{theorem}
\label{the1}
Let $\vF_q$ be a $\Nq \times L$ matrix the columns of which are the $L$ signals $\vf_{\jq} \in \R^\Nq, \, j=1,\ldots,L$ and let
$\vF_q^{\kq}$ be the matrix with the corresponding
$\kq$-term
approximations
$\vf_{\jq}^{\kq}\in  \V_{\kq^q}, \,j=1,\ldots,L$.
For the error $\|\vF_q - \vF_q^{\kq}\|_F^2$ to be
minimum the $\kq$-term
approximations must satisfy:
$\vf_{\jq}^{\kq} = \op_{\V_{\kq^q}} \vf_{\jq},\, 
j=1,\ldots,L$.
\end{theorem}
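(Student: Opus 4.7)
The plan is to reduce the multichannel minimization to a collection of single-channel best-approximation problems, each of which is solved by the orthogonal projection onto $\V_{\kq^q}$, and then re-assemble the channels.

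First I would use the column-wise definition of the Frobenius norm given in the preliminaries, writing
\[
\|\vF_q - \vF_q^{\kq}\|_F^2 = \sum_{j=1}^L \|\vf_{\jq} - \vf_{\jq}^{\kq}\|^2.
\]
Since the constraint $\vf_{\jq}^{\kq} \in \V_{\kq^q}$ is imposed independently on each column (the subspace $\V_{\kq^q}$ is the same for every $j$, but the choice of element within it is free per channel), the minimization decouples: the sum is minimized if and only if each term $\|\vf_{\jq} - \vf_{\jq}^{\kq}\|^2$ is minimized separately over $\vf_{\jq}^{\kq} \in \V_{\kq^q}$.

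Next I would handle the single-channel case via a Pythagorean decomposition. For an arbitrary $\vg \in \V_{\kq^q}$, write
\[
\vf_{\jq} - \vg = (\vf_{\jq} - \op_{\V_{\kq^q}} \vf_{\jq}) + (\op_{\V_{\kq^q}} \vf_{\jq} - \vg).
\]
By property (b) of the projector, the first summand lies in $\V_{\kq^q}^\perp$, while the second lies in $\V_{\kq^q}$. Using the hermitian property (i) together with idempotency (a), one checks that the cross inner product vanishes:
\[
\la \vf_{\jq} - \op_{\V_{\kq^q}} \vf_{\jq},\, \op_{\V_{\kq^q}} \vf_{\jq} - \vg\ra = 0,
\]
so that
\[
\|\vf_{\jq} - \vg\|^2 = \|\vf_{\jq} - \op_{\V_{\kq^q}} \vf_{\jq}\|^2 + \|\op_{\V_{\kq^q}} \vf_{\jq} - \vg\|^2.
\]
The right-hand side is clearly minimized, with the minimum value $\|\vf_{\jq} - \op_{\V_{\kq^q}} \vf_{\jq}\|^2$, exactly when $\vg = \op_{\V_{\kq^q}} \vf_{\jq}$.

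Finally, combining this per-channel optimum with the column-wise decomposition of the Frobenius norm yields the claim: the joint minimum of $\|\vF_q - \vF_q^{\kq}\|_F^2$ is attained precisely when $\vf_{\jq}^{\kq} = \op_{\V_{\kq^q}} \vf_{\jq}$ for all $j=1,\ldots,L$. There is no real obstacle here; the only subtlety worth stating explicitly is the orthogonality of the Pythagorean splitting, which is the one place where the hermitian and idempotent properties of $\op_{\V_{\kq^q}}$ are used, and it is also what guarantees that the decoupled per-channel minimizers aggregate to the true joint minimizer rather than merely a stationary point.
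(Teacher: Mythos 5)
Your proof is correct and follows essentially the same route as the paper: both decouple the Frobenius norm column-wise and then use the hermitian and idempotent properties of $\op_{\V_{\kq^q}}$ to eliminate the cross term, concluding that each channel's optimum is its orthogonal projection. The paper parametrizes the candidate as $\op_{\V_{\kq^q}}\vf_{\jq} + \vg_{\jq}$ with $\vg_{\jq}\in\V_{\kq^q}$ and expands algebraically, while you phrase the same cancellation as a Pythagorean identity, which is only a cosmetic difference.
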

\begin{proof}
$\|\vF_q - \vF_q^{\kq}\|_F^2$ can be expressed as
\ben
\|\vF_q - \vF_q^{\kq}\|_F^2\! &=& \!
\sum_{j=1}^L
\la \vf_{\jq}- \vf_{\jq}^{\kq},\vf_{\jq}- \vf_{\jq}^{\kq}\ra\nonumber\\
 \!&=& \!\sum_{j=1}^L \| \vf_{\jq}\|^2 -2\la \vf_{\jq}, \vf_{\jq}^{\kq}\ra + \|\vf_{\jq}^{\kq}\|^2.
\label{ernorm}
\een
Since $\vf_{\jq}^{\kq}$ is an element of ${\V_{\kq}^q}$
we can write it as $\vf_{\jq}^{\kq} =  
\op_{\V_{\kq}^q} \vf_{\jq} + \vg_{\jq}$ for some $\vg_{\jq} \in \V_{\kq}^q$.  The corresponding
replacements in \eqref{ernorm}, and the fact that
$\vg_{\jq} = \op_{\V_{\kq}^q} \vg_{\jq}$ and $\op_{\V_{\kq}^q}$
is idempotent and hermitian, lead to the expression
\ben
\|\vF_q - \vF_q^{\kq}\|_F^2=
\sum_{j=1}^L \| \vf_{\jq}\|^2 - \la \vf_{\jq},\op_{\V_{\kq}} \vf_{\jq} \ra + \| \vg_{\jq} \|^2, 
\een
from where it follows that $\|\vF_q - \vF_q^{\kq}\|_F^2$
is minimum if $\vg_{\jq}=0, \, j=1,\ldots,L$ 
i.e. $\vf_{\jq}^{\kq}=
\op_{\V_{\kq}} \vf_{\jq},\, j=1,\ldots,L$.
\end{proof}
\begin{corollary}
The statement of
Theorem \ref{the1} also minimizes the
norm of the total residual $\vR^{K}= \vF - \vF^K$ 
in approximating the whole multichannel 
signal, with $\vF^K= \Join_{q=1}^Q \vF_q^{\kq}$ and
$K=\sum_{q=1}^Q \kq$. 
\end{corollary}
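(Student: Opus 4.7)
The plan is to reduce the corollary to Theorem \ref{the1} by exploiting the block decomposition already highlighted in the notation section, namely the identity $\|\vF\|_F^2 = \sum_{q=1}^Q \|\vF_q\|_F^2$ for a signal that is partitioned through the concatenation operation $\Join$.

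First I would observe that, since $\vF = \Join_{q=1}^Q \vF_q$ and $\vF^K = \Join_{q=1}^Q \vF_q^{\kq}$ use the same partition, the total residual $\vR^K = \vF - \vF^K$ partitions blockwise as $\vR^K = \Join_{q=1}^Q (\vF_q - \vF_q^{\kq})$. Applying the Frobenius-norm partition identity to $\vR^K$ then yields
\begin{equation}
\|\vR^K\|_F^2 \;=\; \sum_{q=1}^Q \|\vF_q - \vF_q^{\kq}\|_F^2.
\end{equation}

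Next I would note that the degrees of freedom in each summand are independent: the approximation $\vF_q^{\kq}$ lives in $\V_{\kq^q} \subset \R^{\Nq}$, so the choice made for one block does not constrain the choices made for the others. Consequently, the sum on the right-hand side is minimized if and only if each term is minimized separately. By Theorem \ref{the1}, the $q$-th term attains its minimum when $\vf_{\jq}^{\kq} = \op_{\V_{\kq^q}} \vf_{\jq}$ for every $\jL$, and the corollary follows.

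There is really no technical obstacle here; the argument is essentially bookkeeping enabled by the fact that the Frobenius norm decouples across the blocks of the partition. The only point to be careful about is to state explicitly that the global constraint $K = \sum_{q=1}^Q \kq$ does not couple the blocks when the individual cardinalities $\kq$ are fixed, so that the blockwise optimization implied by Theorem \ref{the1} is indeed legitimate for minimizing the total residual.
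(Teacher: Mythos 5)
Your argument is correct and follows essentially the same route as the paper: decompose $\|\vR^K\|_F^2$ blockwise via the Frobenius-norm identity for the disjoint partition, and minimize each term separately using Theorem \ref{the1}. Your extra remark that fixing the individual $\kq$ decouples the blocks is a harmless clarification of the same idea.
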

\begin{proof}
It readily follows by the a
definition of the adopted disjoint partition: 
$$\|\vR^K\|_F^2= \sum_{q=1}^Q 
\|\vF_q - \vF_q^{\kq}\|_F^2$$ is obviously minimum if 
each $\|\vF_q - \vF_q^{\kq}\|_F^2$ is 
minimum.
\end{proof}
Assuming, for the moment, that the sets of indices 
$\Gamma_q=\{\ell^q_n\}_{n=1}^{\kq}$ labeling 
the atoms in \eqref{atoq} are known,
we recall at this point an effective construction  of 
the required orthogonal projector for optimizing the
approximation. Such a projection is given in terms of
biorthogonal vectors as follows: 
\be
\label{oproj}
\hat{P}_{\V_{k}^q} \vf_{\jq} = \sum_{n=1}^{\kq}
 \vd_{\ell^{q}_n} \la \vB_n^{\kq,q}, \vf_{\jq} \ra =
\sum_{n=1}^{\kq} c_{\jq}^{\kq}(n)\vd_{\ell^{q}_n}.
\ee
For a fixed $q$
the vectors $\vB_{n}^{\kq,q},\,n=1,\ldots,k_q$ are biorthogonal to the selected atoms
$\vd_{\ell_n^{q}},\, n=1,\ldots,k_q$ and span the identical subspace, i.e.,
$$\V_{k_q}^q= \Spann\{\vB_{n}^{k_q,q}\}_{n=1}^{k_q}=
 \Spann\{\vd_{\ell_n}^q\}_{n=1}^{k_q}.$$
Such vectors  can be adaptively constructed, from 
$\vB_1^{1,q}=\vW_1^q =\vd_{\ell_1}^q$, through the
recursion formula \cite{RNL02}:
\be
\begin{split}
\label{BW}
\vB_{n}^{k_q+1,q}&= \vB_{n}^{k_q,q} - \vB_{k_q+1}^{k_q+1,q}\la \vd_{\ell_{k_q+1}}^q, \vB_{n}^{k_q,q}\ra,\quad n=1,\ldots,\kq,\\
%\text{where}\\
\vB_{k_q+1}^{k_q+1,q}&= \vW_{k_q+1}^q/\| \vW_{k_q+1}^q\|^2,
\end{split}
\ee
with 
\be
\begin{split}
\label{GS}
\vW_{k_q+1}^q= \vd_{\ell_{k_q+1}}^q - \sum_{n=1}^{k_q} \frac{\vW_n^q}
{\|\vW_n^q\|^2} \la \vW_n^q, \vd_{\ell_{k_q+1}}^q\ra.
%\vW_1^q &=\vd_{\ell_1}^q \,\, \text{and} \,\,
%\vW_{k_q+1}^q= \vd_{\ell_{k_q+1}}^q - \sum_{n=1}^{k_q} \frac{\vW_n^q}
%{\|\vW_n^q\|^2} \la \vW_n^q, \vd_{\ell_{k_q+1}}^q\ra.
\end{split}
\ee
For numerical accuracy in the construction
of the orthogonal set
$\vW_{n}^q,\,n=1,\ldots,k_q+1$ at
least one re-orthogonalization step
is usually needed. This
implies to recalculate the vectors as
\be
\label{RGS}
\vW_{k_q+1}^q \leftarrow \vW_{k_q+1}^q- \sum_{n=1}^{\kq} \frac{\vW_{n}^q}{\|\vW_n^q\|^2}
\la \vW_{n}^q , \vW_{k_q+1}^q\ra.
\ee
%Notice, from \eqref{oproj}, that the required 
%coefficients in \eqref{atoq} are computed as the
%inner products
%$c_{\jq}^{\kq}(n)=
%\la \vB_{n}^{\kq,q}, \vf_{\jq} \ra,\,n=1,\ldots,\kq.$

The alternative representation of $\op_{\V_{k_q}^q}$,
in terms of vectors $\vW_n^{q},\,n=1,\ldots,\kq$,
gives the decompositions:
\be
\label{oproj2}
\hat{P}_{\V_{\kq}^q} \vf_{\jq} = \sum_{n=1}^{\kq}
\vW_n^q\frac{\la \vW_n^{q},\vf_{\jq}\ra}{\|\vW_n^{q}\|^2},
\quad \jL,\,\qQ.
\ee
While these decompositions are not the
representations of interest (c.f. \eqref{atoq}) they
play a central role in the derivations of the next
section.
\section{Multichannel HBW strategy}
\label{MHBW}
The HBW version of pursuit strategies 
\cite{LRN15, RNMB13} is a dedicated implementation of
those techniques, specially designed for approximating
a signal partition subjected to a global
constraint on sparsity. The approach operates 
 by raking the partition units for their sequential stepwise
approximation. In this section we extend the
HBW  method to consider the case of a multichannel signal, 
within the approximation assumption specified in 
Sec.~\ref{notation}.
\begin{theorem}
\label{the2}
Considerer that, for each block $q$, the $\kq$-term atomic
decompositions \eqref{atoq}
fulfilling that
$\vf_{\jq}^{\kq}= \hat{P}_{\V_{\kq}^q} \vf_{\jq}$ are known,
with ${\V_{\kq}^q}=
\Spann\{\vd_{\ell^{q}_n}\}_{n=1}^{\kq}$.
Let the indices $\ell^{q}_{\kq+1}\notin 
\{\ell^{q}_n\}_{n=1}^{\kq}$ be selected,
for each $q$-value,
by the same criterion as the one used to choose the 
atoms in \eqref{atoq}.
In order to minimize the square norm of the total residual 
$\vR^{K+1}$, with $K=\sum_{q=1}^Q \kq$,
the atomic decomposition to be upgraded at iteration $K+1$
should correspond to the block $\qs$ such that
\be
\label{hbwoompml}
q^\star=
\operatorname*{arg\,max}_{q=1,\ldots,Q}
 \frac{\sum_{j=1}^L|\la \vW_{\kq+1}^q, \vf_{\jq} 
\ra|^2}{\|\vW_{\kq+1}^q\|^2},
\ee
with $\vW_{1}^q= \vd_{\ell_1^q}$ and 
$\vW_{\kq+1}^q= \vd_{\ell_{\kq+1}^q} - \op_{\V_{\kq}^q}
\vd_{\ell_{\kq+1}^q}$.
\end{theorem}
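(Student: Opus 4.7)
The plan is to turn the selection of $\qs$ into a per-block optimisation and then evaluate that optimisation in closed form using the rank-one update of the orthogonal projector supplied by property (ii).

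First I would apply the Corollary to Theorem \ref{the1} and write
\be
\|\vR^{K+1}\|_F^2 = \sum_{q=1}^Q \|\vF_q - \vF_q^{\kq + \delta_{q,\qs}}\|_F^2.
\ee
Since the step $K \mapsto K+1$ increments $\kq$ for exactly one block $\qs$ and leaves every other block with its current approximation, $Q-1$ of these summands are independent of the choice of $\qs$ and therefore cancel in any comparison between candidate blocks. Consequently, minimising $\|\vR^{K+1}\|_F^2$ reduces to maximising, over $q$, the single-block error decrease
\be
\Delta_q := \|\vF_q - \vF_q^{\kq}\|_F^2 - \|\vF_q - \vF_q^{\kq+1}\|_F^2.
\ee

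Next I would evaluate each $\Delta_q$ explicitly. Theorem \ref{the1}, specialised to the optimal case $\vg_{\jq}=0$ inside its proof, yields
\be
\|\vF_q - \vF_q^{\kq}\|_F^2 = \sum_{j=1}^L \Big( \|\vf_{\jq}\|^2 - \la \vf_{\jq}, \op_{\V_{\kq}^q} \vf_{\jq} \ra \Big),
\ee
and analogously at level $\kq+1$. Subtracting the two formulas and substituting the rank-one update $\op_{\V_{\kq+1}^q}\vf_{\jq} = \op_{\V_{\kq}^q}\vf_{\jq} + \vW_{\kq+1}^q \la \vW_{\kq+1}^q, \vf_{\jq}\ra/\|\vW_{\kq+1}^q\|^2$ from property (ii), and using the real-inner-product identity $\la \vf_{\jq}, \vW_{\kq+1}^q\ra = \la \vW_{\kq+1}^q, \vf_{\jq}\ra$, collapses the expression to
\be
\Delta_q = \sum_{j=1}^L \frac{|\la \vW_{\kq+1}^q, \vf_{\jq}\ra|^2}{\|\vW_{\kq+1}^q\|^2}.
\ee
Taking $\operatorname*{arg\,max}_q$ of this quantity reproduces exactly \eqref{hbwoompml}.

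I do not anticipate a genuine obstacle: the machinery of Theorem \ref{the1}, its Corollary, and property (ii) is already in place, so the argument is essentially a substitution. The mildest subtlety is the initial bookkeeping observation that contributions from the $Q-1$ unchanged blocks cancel identically in the comparison, so that the global minimisation legitimately reduces to a per-block greedy choice; once that reduction is recognised, the closed-form evaluation of $\Delta_q$ via the rank-one projector update is immediate and matches the claim.
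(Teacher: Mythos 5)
Your proposal is correct and follows essentially the same route as the paper: split $\|\vR^{K+1}\|_F^2$ over the disjoint blocks, observe that only the upgraded block's error changes, and use the rank-one projector update of property (ii) to express the per-block gain as $\sum_{j=1}^L |\la \vW_{\kq+1}^q,\vf_{\jq}\ra|^2/\|\vW_{\kq+1}^q\|^2$, whose maximization over $q$ gives \eqref{hbwoompml}. Your explicit framing via the decrease $\Delta_q$ (subtracting the two error formulas) is just a slightly more spelled-out version of the paper's step of maximizing $\sum_{j=1}^L\|\op_{\V_{\kq+1}^q}\vf_{\jq}\|^2$ with $\op_{\V_{\kq}^q}\vf_{\jq}$ held fixed.
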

\begin{proof}
Since at iteration $K+1$ the atomic decomposition of only 
{\em{one}} block is upgraded by one atom, 
the total residue at that iteration is constructed as
$$\vR^{K+1}=\Join_{\substack{p=1\\ p \ne q}}^Q\,\vR_p^{k_p} 
\; \Join \; \vR_q^{\kq+1}.$$
Then, 
$$\|\vR^{K+1} \|_F^2= \sum_{\substack{p=1\\ p \ne q}}^Q
\|\vR_p^{k_p}\|_F^2 + \|\vR_q^{\kq+1}\|_F^2,$$
so that   $\|\vR^{K+1} \|$ is minimized by 
the minimum value of $\|\vR_q^{\kq+1}\|_F^2$.
Moreover, by definition 
$\|\vR_q^{\kq+1}\|_F^2= \|\vF_q - \vF_q^{\kq+1}\|_F^2$
and, from \eqref{ernorm} and the fact that 
$\vf_{\jq}^{\kq+1}= \op_{\V_{\kq+1}^q} \vf_{\jq}$, 
we can write: 
\ben
\label{errk1}
\|\vR_q^{\kq+1}\|_F^2
&=&\sum_{j=1}^L \| \vf_{\jq}\|^2-\| \op_{\V_{\kq+1}^q} \vf_{\jq} \|^2.
\een
Then, $\|\vR_q^{\kq+1}\|_F^2$
is minimum if 
$\sum_{j=1}^L \|\op_{\V_{\kq+1}^q}\vf_{\jq}\|^2$ is maximum.  
Applying the property ii) of an orthogonal projector 
listed in Sec.~\ref{notation} we have:
$$ \sum_{j=1}^L \|\op_{\V_{\kq+1}^q} \vf_{\jq} \|^2= 
\sum_{j=1}^L \|\op_{\V_{\kq}^q} \vf_{\jq}\|^2  + 
 \sum_{j=1}^L \frac{|\la \vW^q_{{\kq+1}},\vf_{\jq}\ra|^2}
{\|\vW^q_{{\kq+1}}\|^2},$$ with
 $\vW^q_{{\kq+1}}= 
\vd_{\ell^q_{\kq+1}} - \op_{\V_{\kq}^q} \vd_{\ell_{\kq+1}^q}.$
Because $\op_{\V_{\kq}^q} \vf_{\jq}$ is 
fixed at iteration $K+1$, we are in a position 
to conclude that  
 $\|\vR^{K+1} \|_F^2$ is 
minimized by  upgrading the 
atomic decomposition of the block $\qs$ satisfying 
\eqref{hbwoompml}.
%, $\|\vR_q^{\kq+1}\|_F$ is minimum  if
%$ \sum_{j=1}^L |\frac{\la \vW^q_{{\kq+1}},\vf_{\jq}\ra|}
%{\|\vW^q_{{\kq+1}}\|}$ is maximum.
%Notice also that
%\ben
%\frac{|\la \vW_{\kq+1},\vf_{\jq} \ra|^2}{\|\vW_{\kq+1}\|^2} & = &
%\frac{|\la \vd_{\ell_{\kq+1}^q}
% - \op_{\V_{\kq}} \vd_{\ell_{\kq+1}^q}, \vf_{\jq}\ra|^2}{
%1 - \la \vd_{\ell_{\kq+1}^q}, \op_{\V_{\kq}}  \vd_{\ell_{\kq+1}^q}\ra}
%\nonumber\\
%&=& \frac{|\la \vd_{\ell_{\kq+1}^q},  \vf_{\jq} - \op_{\V_{\kq}} \vf_{\jq} \ra|^2}{1 - \la \vd_{\ellq_{\kq+1}}, \op_{\V_{\kq}}  \vd_{\ell_{\kq+1}^q}\ra}
%\nonumber\\
%&=& \frac{|\la \vd_{\ell_{\kq+1}^q}, \vr_{\jq}^{\kq} \ra|^2}
%{1 - \la \vd_{\ell_{\kq+1}^q}, \op_{\V_{\kq}}  \vd_{\ell_{\kq+1}^q}\ra}\nonumber\\
%&=& \frac{|\vd_{\ell_{\kq+1}^q}, \vr_{\jq}^{\kq}\ra|^2}
%{1 - \sum_{k=1}^{\kq} |\la \vd_{\ell_{\kq+1}^q}, \vW_k\ra|^2.}
%\een
\end{proof}
Theorem \eqref{the2} gives the HBW 
prescription optimizing the
raking of the blocks in a multichannel signal partition, 
for their sequential stepwise approximation. This is 
 irrespective of what the criterion for choosing the 
atoms for the approximation is. 
The  method as a whole depends 
on that criterion, of course. One can 
use for example an extension 
of the Orthogonal Matching Pursuit 
 (OMP) criterion, 
which when applied to multichannel 
signals has been termed 
%vector orthogonal greedy weak algorithm in \cite{LT03} or
 simultaneous OMP in \cite{TGS06}. 
 According to this criterion the indices of 
the atoms in the approximation of 
each block-$q$ are such that
\be
\label{ompml1}
\ell_{\kq+1}^q= \operatorname*{arg\,max}_{n=1,\ldots,M}
\sum_{j=1}^L |\la \vd_n, \vr_{\jq}^{\kq}\ra|,
\ee
where $\vr_{\jq}^0= \vf_{\jq}$ and $\vr_{\jq}^{\kq}= 
\vf_{\jq}- \op_{\V_{\kq}^q} \vf_{\jq}$. 
Alternatively, the extension of OMP to multichannels 
which is known as  MMV-OMP \cite{CRE05,CH06}
  selects the index fulfilling
\be
\label{ompml}
\ell_{\kq+1}^q= \operatorname*{arg\,max}_{n=1,\ldots,M}
\sum_{j=1}^L |\la \vd_n, \vr_{\jq}^{\kq}\ra|^2.
\ee
The optimization of the OMP criterion to select  the 
atoms minimizing  the  norm of the residual error 
for each block goes with 
several names, according to the context were it was 
derived and the actual implementation. In one of 
the earliest
references \cite{CBL89} is called Orthogonal 
Least Square. In others
is called Order Recursive Matching Pursuit (ORMP) 
and in particular for multichannel signals MMV-ORMP
\cite{CRE05}. The 
implementation we adopt here is termed Optimized Orthogonal 
Matching Pursuit (OOMP) \cite{RNL02} and will be termed
OOMPMl for multichannel signals. 
 The approach selects the atoms  
$\ell_{\kq+1}^q$,
 for the approximation of block $q$, in other to 
minimize the norm of the residual error 
$\| \vF_q - \vF_q^{\kq+1}\|$ for that block. 
Those atoms correspond to the indices selected as:
\be
\label{oompml}
\ell_{\kq+1}^q=\operatorname*{arg\,max}_{\substack{n=1,\ldots,M\\ n\notin \Gamma_q}}
 \frac{\sum_{j=1}^L|\la \vd_n,\vr_{\jq}^{\kq}
\ra|^2}{1 - \sum_{i=1}^{\kq}
|\la \vd_n ,\vWt_i^q\ra|^2}
, \quad \qQ,
\ee
where $\Gamma_q=\{\ell_n^q\}_{n=1}^{\kq}$,  
$\vr_{\jq}^{\kq}= 
\vf_{\jq}- \op_{\V_{\kq}^q} \vf_{\jq}$, with  
$\vr_{\jq}^{0}= \vf_{\jq}$, and 
${\ds{\vWt_i^q= \frac{\vW_i^q}{\|\vW_i^q\|}}}$,
with $\vW_i^q,\, i=1,\ldots,\kq$ as in \eqref{GS}. The 
proof follows as in Theorem 2, but fixing the 
value of $q$ and taking the maximization over
the index. 

As will be discussed in the next section, 
the used of trigonometric 
dictionaries reduces the complexity of the 
calculations in \eqref{ompml1}, \eqref{ompml},
 and \eqref{oompml}.

The particularity of the OOMPMl implementation 
 being that
the coefficients of the atomic decomposition 
\eqref{atoq} are calculated 
using vectors \eqref{BW} which are adaptively
upgraded together with the selection of each new atom. 
For the $q$th-block the coefficients in the atomic 
decompositions \eqref{atoq} are computed as: 
\be
\label{coeop}
c_{\jq}^{\kq}(n)=\la\vB_{n}^{\kq,q}, \vf_{\jq} 
\ra,\,n=1,\ldots,\kq,
,\,\jL,
\ee
with $\vB_{n}^{\kq,q}$ as in \eqref{BW}. Thus, 
when the channels have similar sparsity structure 
by approximating all of them simultaneously 
the complexity is reduced.

The HBW-OMPMl/OOMPMl approach is implemented 
by the following steps: 
\begin{itemize}
\item[1)] Initialize the algorithm by 
selecting the `potential' first atom for 
 the atomic decomposition of every 
block $q$, according to criterion \eqref{ompml} or 
 \eqref{oompml}. For $\qQ$ set: $\kq=1,
\vW_1^q= \vB_1^{1,q}= \vd_{\ell_1^q}.$ 
\item[2)] Use criterion \eqref{hbwoompml} for 
selecting the block $\qs$ to 
upgrade the atomic decomposition by incorporating 
the atom corresponding to the index ${\ell^{\qs}_{k_{\qs}}}$.  If $k_{\qs}>1$ upgrade vectors \eqref{BW} for block $\qs$.
\item[3)]Increase $k_{\qs}\leftarrow k_{\qs}+1$ and 
select a new potential atom for the 
 atomic decomposition of block $\qs$, using the 
same criterion as in 1). 
Compute the corresponding 
$\vW_{k_{\qs}}^{\qs}$ (c.f. \eqref{GS}).
\item[4)] Check if, for a given 
$K$, the condition $\sum_{q=1}^Q \kq=K+1$ has been met. 
Otherwise repeat steps 2) - 4).
\item[5)] For each block, $\qQ$, 
 calculate the coefficients in \eqref{atoq} 
as in \eqref{coeop}.
\end{itemize} 
\subsection{Implementation details 
with Trigonometric Dictionaries}
In \cite{LRN15} we illustrate the clear advantage of 
approximating music using a mixed dictionary with 
components $\D^c$ and $\D^s$ as below
\begin{itemize}
\item
$\mathcal{D}^{c}=\{ \frac{1}{w^c(n)}\,
 \cos ({\frac{{\pi(2i-1)(n-1)}}{2M}}),\,i=1,\ldots,\Nq\}_{n=1}^{M}.$
\item
$
\mathcal{D}^{s}=\{\frac{1}{w^s(n)}\, \sin  ({\frac{{\pi(2i-1)n}}{2M}}), \,i=1,\ldots,\Nq\}_{n=1}^{M},$
\end{itemize}
where $w^c(n)$ and $w^s(n),\, n=1,\ldots,M$ are
normalization factors as given by
$$w^c(n)=
\begin{cases}
\sqrt{\Nq} & \mbox{if} \quad n=1,\\
\sqrt{\frac{\Nq}{2} + \frac{ 
\sin(\frac{\pi (n-1)}{M}) \sin(\frac{2\pi(n-1)\Nq}{M})}
{2(1 - \cos(\frac{ 2 \pi(n-1)}{M}))}} & \mbox{if} \quad n\neq 1.
\end{cases}
$$
$$w^s(n)=
\begin{cases}
\sqrt{\Nq} & \mbox{if} \quad n=1,\\
\sqrt{\frac{\Nq}{2} - \frac{ 
\sin(\frac{\pi n}{M}) \sin(\frac{2\pi n \Nq}{M})}
{2(1 - \cos(\frac{ 2 \pi n}{M}))}} & \mbox{if} \quad n\neq 1.
\end{cases}
$$
Fixing $M=2\Nq$ a dictionary redundancy four is constructed 
as $\D= \D^c \cup \D^s$. In addition to yielding highly 
sparse representation of melodic music,  
this trigonometric dictionary leaves room for 
reduction in the computational complexity 
of the algorithms 
and also in the storage demands. As discussed below, 
savings are made 
possible in an straightforward manner via the Fast Fourier 
Transform (FFT).

Given a vector $\vy\in \R^M$ we define
\be
\label{dft}
{\cal{F}}(\vy,n,M)= \sum_{j=1}^M y(j) e^{\im  2 \pi \frac{(n-1)(j-1)}{M}},\quad n=1,\ldots,M.
\ee
When $M=\Nq$ \eqref{dft} is the Discrete Fourier Transform 
of vector $\vy \in \R^{\Nq}$, which can be evaluated using
FFT. If $M>N_b$ we can still calculate \eqref{dft} via 
FFT by padding with $(M-N_b)$ zeros the vector  $\vy$. 
Accordingly, \eqref{dft} is a useful tool for calculating inner 
products with the atoms in dictionaries $\D^c$ and 
$\D^s$. For $n=1,\ldots,M$ it holds that
\be
\label{dct}
\sum_{j=1}^{\Nq}\cos{\frac{{\pi(2j-1)(n-1)}}{2M}}
y(j)= \Re \left(e^{-\im \frac{\pi (n-1)}{2M}}
{\cal{F}}(\vy,n,2M)\right)
\ee
and for $n=2,\ldots, M+1$
\be
\label{dst}
\sum_{j=1}^{\Nq}\sin{\frac{{\pi(2j-1)(n-1)}}{2M}}
y(j)= \Im \left(e^{-\im \frac{\pi (n-1)}{2M}}
{\cal{F}}(\vy,n,2M)\right),
\ee
where $\Re(z)$ indicates the real part of $z$ and
$\Im(z)$ its imaginary part.
The assistance of the FFT for performing 
the inner products \eqref{dct} and \eqref{dst}
reduces the complexity in calculating the 
maximizing function in \eqref{ompml},  
 (which is also the numerator  in \eqref{oompml})
  from $2M \Nq$ to  $2M (\log_2 2M + 1)$.  
Storing the sums 
$S_n^{\kq-1}=\sum_{i=1}^{\kq-1}|\la \vd_n,\vWt_{i}^q \ra|^2,\,n=1\,\ldots,M$ 
the denominator 
in the right hand side of 
\eqref{oompml} involves the calculation of 
$|\la \vd_n,\vWt_{\kq}^q \ra|^2,\,n=1\,\ldots,M$, 
which can also be computed via FFT. Hence the 
complexity for the calculation of 
the denominator in \eqref{oompml} is of
 the same order as that for the  calculation of  
the numerator. Criterion \eqref{oompml} in general 
yields higher sparsity, hence it reduces the cost 
in calculating vectors \eqref{BW}. 

The MATLAB function implementing 
the HBW-OOMPMl approach, named
HBW-OOMPMlTrgFFT when dedicated to
the above trigonometric dictionary,
has been made available on \cite{webpage}.
\section{A simple coding strategy}
\label{cod}
Previously to entropy encoding the coefficients resulting
from approximating a signal by partitioning, the
real numbers need to be converted into integers. This
operation is known as quantization. For the
numerical example of Sec.~\ref{NE} we adopt a simple
uniform quantization technique:
The absolute value coefficients
$|c_{\jq}(n)|,\,n=1\ldots,\kq,\,q=1,\ldots,Q,
\,j=1,\ldots,L$ are
converted to integers as follows:
\be
\label{uniq}
c^\Delta_{\jq}(n)= \lfloor \frac{|c_{\jq}(n)|}{\Delta} +\frac{1}{2} \rfloor,
\ee
where $\lfloor x \rfloor$ indicates the largest
integer number
smaller or equal to $x$ and  $\Delta$ is the quantization
parameter.
The signs of the coefficients, represented as
$\vs_{\jq},\,q=1,\ldots,Q,\,j=1,\ldots,L$,
are encoded separately using a binary alphabet.
As for the indices of the atoms, which are
common to the atomic
decompositions of all the channels,
they are firstly sorted in ascending
order
$\ell_{i}^q \rightarrow \til{\ell}_i^q,\,i=1,\ldots,\kq$,
which guarantees that, for each $q$ value,
$\til{\ell}_i^q < \til{\ell}_{i+1}^q,\,i=1,\ldots,\kq-1$.
This order of the indices induces
an order in the coefficients,
$\vc^\Delta_{\jq} \rightarrow \vct^\Delta_{\jq}$ and
in the corresponding signs $\vs_{\jq} \rightarrow
\vst_{\jq}$.  The advantage introduced by the
ascending order of the indices is that they can be
stored as smaller positive numbers by taking differences
between two consecutive values. Indeed,
by defining $\delta^q_i=\ellt^q_i- \ellt^q_{i-1},\,i=2,\ldots,\kq$ the follow string  stores the indices for block
$q$ with unique recovery
${\ellt^q_1, \delta^q_2, 
\ldots, \delta^q_{\kq}}$.
The number `0' is then used to separate the string
corresponding to different blocks and entropy code
 a long string, $\Sti$, which is built as
\be
\label{sti}
\Sti=[{\ellt_1}^1,\ldots,\delta^1_{k_1},0, \cdots,
0, \cdots,
{\ellt_1}^{k_Q},\ldots, \delta^Q_{k_Q}].
\ee
The corresponding quantized magnitude of the coefficients
of each
channel are concatenated in the $L$ strings  $\Stc,\, 
j=1,\ldots,L$ as follows:
\be
\label{stc}
\Stc
= [\ctq_{1,j}(1), \ldots, \ctq_{1,j}(k_1), \cdots,
\ctq_{{k_Q},j}(1), \ldots, \ctq_{{k_Q},j}(k_Q)].
\ee
Using `0' to store a positive sign and `1' to store
negative one, the signs  are placed
in the  $L$ strings, $\Sts, j=1,\ldots,L$  as
\be
\label{sts}
\Sts=[\st_{1,j}(1), \ldots, \st_{1,j}(k_1), \cdots,
\st_{{k_Q},j}(1), \ldots, \st_{{k_Q},j}(k_Q)].
\ee

The next encoding$/$decoding scheme
summarizes the above described procedure.\\\\
\newpage
{\bf{Encoding}}
\begin{itemize}
\item
Given a partition
$\vF_q \in \R^{\Nq \times L},\, q=1,\ldots,Q$
of a multichannel signal,
where for each $q$
the channels $\vf_{\jq} \in \R^{\Nq},\,j=1,\ldots,L$
are placed as columns of $\vF_q$,
approximate simultaneously all the channels through the 
 HBW-OOMPMlTrgFFT approach 
using $K=\sum_{q=1}^{Q}\kq$ atoms to obtain:
\be
\label{atc}
\vf_{\jq}^{\kq}=\sum_{n=1}^{\kq} c_{\jq}(n)\vd_{\ell_n},
\quad j=1,\ldots,L, \, q=1\ldots,Q.
\ee
\item
Quantize, as in \eqref{uniq},
the absolute vale coefficients in the above
equation to obtain
$c_{\jq}^{\Delta} (n),\,n=1,\ldots,\kq,\, j=1,\ldots,L,\, 
q=1,\ldots,Q$.
\item
For each $q$, sort the indices
$\ell_1^q,\ldots,\ell_{\kq}$ in ascending
oder to have a new order $\ellt_1^q,\ldots, \ellt_{\kq}$
and the re-ordered sets $\st_{\jq}(1), \ldots, \st_{\jq}(\kq)$,
and $\ct_{\jq}(1),\ldots,\ct_{\jq}(\kq)$, 
to create the strings:
$\Sti$, as in \eqref{sti}, and
$\Stc$, and $\Sts,\, j=1,\ldots,L$ as in \eqref{stc} and
\eqref{sts}, respectively. All these strings are
 encoded, separately, using arithmetic coding.
\end{itemize}
{\bf{Decoding}}
\begin{itemize}
\item
Reverse the arithmetic coding to recover strings
 $\Sti, \Stc, \Sts, j=1,\ldots,L$.
\item
Invert the quantization step as
$|\til{c}_{\jq}^{\mathrm{r}}(n)|= 
\Delta {\til{c}}_{\jq}^\Delta(n)$.
\item
Recover the partition of each channel
through the liner combination
$$\vf_{\jq}^{{\mathrm{r}},\kq} 
 = \sum_{n=1}^{\kq} \til{s}_{\jq}(n)|{\til{c}}_{\jq}^{\mathrm{r}}(n)| \vd_{\til{\ell}_{n}^q}.$$
%\,\,j=1,\ldots,L,\, q=1,\ldots,Q.$$
\item
Assemble the recovered signal for each channel as
$$\vf_j^{\mathrm{r}}=\Join_{q=1}^Q \vf_{\jq}^{{\mathrm{r}},\kq},\,\jL$$
\end{itemize}
As already mentioned,  the quality of the 
recovered signal is assessed by the SNR measure,
which is calculated as
$$\text{SNR}=\log_{10} \frac{\| \vF\|_F^2}{\|\vF - \vFr\|_F^2}= 10 \log_{10}\frac{\sum_{j=1}^{L} 
\|\vf_{j}\|^2}{
\sum_{j=1}^{L} \|\vf_{j} 
-\vfr_{j}\|^2}.$$
%10 \log_{10}\frac{\sum_{q=1,\,j=1}^{Q,L} 
%$\|\vf_{\jq}\|^2}{
%$\sum_{q=1,\,j=1}^{Q,L} \|\vf_{\jq} 
%$-\vfr_{\jq}\|^2}.$$
\section{Numerical Example}
\label{NE}
This section is dedicated to illustrate the potential 
of the proposed codec for compressing melodic music 
with high quality recovery. The comparison with 
the state of the art is realized 
with respect to the OGG format \cite{xiph}.
 The reasons being: 
a) OGG is free licence. b) It is known to 
recover a signal of 
audible quality comparable to MP3 (superior for some 
opinions)
from a file of the same size. c)For a high quality setting
(e.g. more than 90$\%$) OGG produces a high SNR, 
which implies that the recovered signal 
is close to the original signal in the sense 
of the usual Euclidean distance.

The test clips, originally in WAV format and all sampled at 
44100 Hz, are listed in Table I. All the clips are 
stereo, with two channels. The SNR, in all the cases, is 
fixed by setting the OGG quality 90$\%$. For comparison 
purposes the 
Trigonometric Dictionary Codec (TDC) is tuned to 
reproduce the same SNR in each case.  
The partition unit is fixed as $\Nq=1024$ 
sample points.
The approximation routine 
is set to produce a SNR a few dBs higher 
than the required one
and the quantization parameter $\Delta$ is tuned to match 
the OGG's SNR. The file sizes are shown in Table I. 
The sizes corresponding to the TDC  are obtained using 
the Arith06 MATLAB function at the entropy 
coding step. The function is available on \cite{Karl}. 
Figure 1 shows the comparison bars 
in kbps (kilobit per sec) 
of the compression rate for the clips of Table I and 
the corresponding SNR values.
\begin{table}[!h]
\label{tab1}
\begin{center}
\begin{tabular}{|l|l||c|r|r|}
\hline
Clip&  SNR  & OGG & TDC \\ \hline \hline
C1  Electric Guitar &32.10dB& 212KB&72KB \\ \hline
C2  Harmonics Guitar&35.30dB&  359KB&126KB  \\ \hline
C3  Classic Guitar  &35.84dB& 723KB&89KB    \\ \hline
C4  Pop Piano Chord &32.53dB& 261KB&55KB    \\ \hline
C5  Cathedral Organ &34.38dB& 537KB&110KB   \\ \hline
C6  Orchestra Horns &39.07dB& 629KB&122KB   \\ \hline
C7  Ascending Jazz  &33.38dB& 102KB&28KB    \\ \hline
C8  Orchestrated    &35.56dB& 205KB&31KB    \\ \hline
C9  Classic Orchestra &34.63dB& 125KB&33KB    \\ \hline
C10 Trumpet Sax     &34.25dB&  78KB&32KB    \\ \hline
C11 Orchestra Entrance&34.21dB&74KB&19KB    \\ \hline
C12 Piazzola (Orches.) &31.90dB&205KB&58KB    \\ \hline
C13 Chopin (Piano)   &35.44dB& 414KB&58KB   \\ \hline \hline
\end{tabular}
\caption{Comparison of file sizes (in Kilobytes) compressing 
the clips to produce the same SNR. The SNR values arise 
by setting 90$\%$ quality for the OGG compression. Most of 
the clips are from {\tt{free-loops.com}}. C3 and C9 and C13 
are from sample WAV files on {\tt{onclassical.com}}}. 
\end{center}
\end{table}
\begin{figure}[!ht]
\begin{center}
\includegraphics[width=9cm]{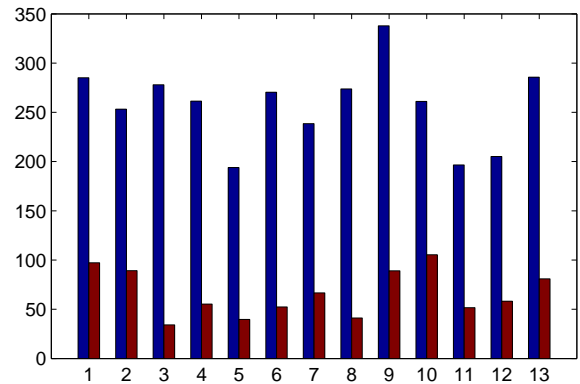}
\caption{Comparison bars, OGG vs TDC for 
the clips of Table 1. The vertical axis corresponds to 
the compression rate in kbps.}
\end{center}
\end{figure}
\section{Conclusions}
\label{con}
The proof of concept of the proposed TDC has been 
presented. Comparisons with the OGG Vorbis standard 
at 90$\%$ quality demonstrate the potential of the 
proposed codec: For all the short clips of melodic 
music that have been tested (in addition to those
in Table I) the TDC achieves remarkable reduction 
in the file size for the identical quality. 

Because this work focusses on assessing compression 
vs quality, the pursuit technique which has been 
applied at the approximation stage aims at producing 
high sparsity. At the entropy 
coding step compression performance was prioritized 
over speed. In addition to Arith06, the arithmetic encoder 
Arith07 and Huffman encoder Huff06, implemented by 
the MATLAB functions available on \cite{Karl}, 
have been tested. All these entropy coding 
techniques produce similar outputs.

{\bf{Note}}: The MATLAB functions for implementing
the TDC and running the numerical examples of 
Sec.~\ref{NE}, can be downloaded on \cite{webpage}. 

\subsection*{Acknowledgements}
The author is grateful to  
Karl Skretting for making available the
Arith06, Arith07, and Huff06 functions 
for entropy coding \cite{Karl}.
\bibliographystyle{IEEEbib}
\bibliography{revbib}

\begin{thebibliography}{10}
\bibitem{HS01} M. Hans and  R.W Schafer,
``Lossless compression of digital audio''
{\em{IEEE Signal Processing Magazine,}} {\bf{18}},  
21--32 (2001).

\bibitem{LR04}
T. Liebchen and Y.A. Reznik, ``MPEG-4 ALS: an
Emerging Standard for Lossless Audio Coding,'' in
Proceedings of the Data Compression Conference,
439--448, Snowbird, Utah, March 2004.

\bibitem{YRL06}
Yu Rongshan, S. Rahardja,  Xiao Lin, and  Ko Chi Chung,
``A fine granular scalable to lossless audio coder''
{\em{IEEE Transactions on Audio, Speech, and Language Processing,}} {\bf{14}}, 1352--1363 (2006).

\bibitem{HFH08} H. Huang, P. Franti, D. Huang, 
and S. Rahardja, ``Cascaded RLS-LMS
prediction in MPEG-4 lossless audio coding'',
{\em{IEEE Transactions on Audio, Speech,
 and Language Processing}}, {\bf{16}}, 554--562 (2008).

\bibitem{GT08}
F. Ghido and I. Tabus,
 ``Benchmarking of Compression and Speed Performance for Lossless Audio Compression Algorithms'',in International Conference on Acoustics, Speech and Signal Processing, ICASSP (2008).

\bibitem{NZ11}
N. Nowak and W. Zabierowski,
``Methods of sound data compression — Comparison of different standards'', CAD Systems in Microelectronics (CADSM), 
11th International Conference The Experience of 
Designing and Application of,  431--434 (2011).

\bibitem{ABG12}
A. Avramovic, G. Banjac, and J. Galic, 
``Lossless audio compression using modular 
arithmetic and performance-based adaptation'',
Telecommunications Forum (TELFOR), 20th, 1256--1259 (2012).

\bibitem{GT13}F. Ghido and I. Tabus, 
``Sparse Modeling for Lossless Audio Compression'', 
IEEE Transactions on Audio, Speech,
 and Language Processing, {\bf{21}}, 14--28 (2013). 

\bibitem{LT03}
A. Lutoborski, and V. N. Temlyakov, 
``Vector greedy algorithms'', {\em{Journal of Complexity}},
{\bf{19}}, 458--473 (2003).

\bibitem{LT06}
D. Leviatan and V. N. Temlyakov,
``Simultaneous approximation by greedy algorithms'',
{\em{Advances in Computational Mathematics,}}
{\bf{25}}, 73--90 (2006).


\bibitem{TGS06}
A. Tropp, A. C. Gilbert, and  M. J. Strauss,
``Algorithms for simultaneous sparse approximation. Part I: Greedy pursuit'', {\em{Signal Processing}}, 
{\bf{86}} 572--588 (2006).


\bibitem{CRE05}
S. F. Cotter, B. D. Rao, K. Engan, and K. Kreutz-Delgado, 
``Sparse Solutions to Linear Inverse Problems
with Multiple Measurement Vectors,''
{\em{IEEE Transactions on Signal Processing}}, 
{\bf{53}} 2477--2488 (2005).

\bibitem{CH06}
Jie Chen, Xiaoming Huo,
``Theoretical Results on Sparse Representations of
Multiple-Measurement Vectors'', 
{\em{IEEE Transactions on Signal Processing,}} {\bf{54}}, 
 4634--4643 (2006).

\bibitem{RNMB13} L.~Rebollo-Neira, R. Matiol, S. Bibi,
``Hierarchized block wise image approximation by greedy pursuit strategies,'' {\em{IEEE Signal Process. Letters}},
{\bf{20}}, 1175--1178 (2013).

\bibitem{LRN15}
L. Rebollo-Neira,
``Cooperative Greedy Pursuit Strategies for Sparse Signal
Representation by Partitioning'', arXiv:1501.05971, 2015.

\bibitem{PAB06}
M. D. Plumbley, S. A. Abdallah, T. Blumensath, 
M. E. Davies, ``Sparse Representations of Polyphonic Music'',{\em{Signal Processing}}, {\bf{86}}, 417--431, (2006).

\bibitem{FVP06}
R. M. Figueras i Ventura, P. Vandergheynst and 
P. Frossard, ``Low rate and scalable image coding with 
redundant representations'', {\em{IEEE Transactions on Image Processing}}, {\bf{15}}, 726--739 (2006).

\bibitem{MES08}
J. Mairal, M. Eldar and G. Sapiro,
\newblock ``Sparse Representation for Color Image Restoration'',
\newblock {\em{IEEE Trans. Image Proces.,}}
 {\bf{17}},  53 -- 69 (2008).

\bibitem{WYG09}
J. Wright, A.Y. Yang, A. Ganesh, S.S. Sastry, Yi Ma,
``Robust Face Recognition via Sparse Representation''
{\em{IEEE Transactions on Pattern Analysis and Machine
Intelligence}}'',{\bf{31}}, 210 -- 227 (2009).

\bibitem{Ela10}
M. Elad, Sparse and Redundant Representations:
From Theory to Applications in Signal and Image Processing,
Springer (2010).

\bibitem{WMM10}
J. Wright, Yi Ma, J. Mairal, G. Sapiro,  T.S. Huang, and  S. Yan,
\newblock ``Sparse Representation for Computer Vision and Pattern Recognition'',
\newblock {\em{Proc. of the IEEE}}  {\bf{98}}, 1031 -- 1044 (2010).

\bibitem{YUW10}
Jianchao Yang, I. L., Urbana, J.  Wright, T.S. Huang, and
 Yi Ma, ``Image Super-Resolution Via Sparse Representation''
{\em{IEEE Transactions on Image Processing}}, {\bf{19}},
2861--287 (2010).

\bibitem{PBD10} 
M. D. Plumbley, T. Blumensath, L. Daudet, R. Gribonval, 
and M. E.  Davies, ``Sparse representations
in audio and music: From coding to source separation,''
{\em{Proc. IEEE,}}  {\bf{98,}} 995--1005 (2010).


\bibitem{BE08}
O. Bryt and  M. Elad, 
``Compression of facial images using the K-SVD algorithm''
{\em{Journal of Visual Communication and Image 
Representation}}, 
{\bf{19}} 270--282 (2008).

\bibitem{ZGK11}
J. Zepeda,  C. Guillemot, E. Kijak, 
``Image Compression Using Sparse Representations 
and the Iteration-Tuned and Aligned Dictionary'', 
{\em{IEEE Journal of Selected Topics in Signal Processing,}}
{\bf{5}},1061--1073 (2011).

\bibitem{RCE14} 
I. Ram, I. Cohen, and M. Elad,
Facial Image Compression using Patch-Ordering-Based 
Adaptive Wavelet Transform, 
{\em{IEEE Signal Processing Letters,}} {\bf{21}}, 


\bibitem{RNL02}
L.~Rebollo-Neira and D.~Lowe,
``Optimized orthogonal matching pursuit approach'',
{\em{IEEE Signal Process. Letters}}, {\bf{9}},
137--140 (2002).

%\bibitem{CCG91} S. Chen, C. F. N. Cowan, and P. M Grant,
% ``Orthogonal Least Square Algorithm for
% Radial Basis Function Networks'', {\em{IEEE transaction on
%Neural Networks}}, {\bf{2}}, 302--309 (1991).

\bibitem{CBL89}S. Chen, S. A. Billings,  W. Luo,
``Orthogonal least squares methods and their application
to non-linear system identification''.
{\em{International Journal of Control}} (1989).


%\bibitem{RNA15} L. Rebollo-Neira and G. Aggarwal, 
%``A dedicated greedy pursuit algorithm for sparse spectral 
%modelling of music sound'', arXiv:1509.07659, 2015.

\bibitem{webpage}
{\tt{http://www.nonlinear-approx.info}}

\bibitem{xiph}
{\tt{https://xiph.org/ogg}}

\bibitem{Karl}{\tt{http://www.ux.uis.no/$\sim$karlsk/proj99}}

\end{thebibliography}
\end{document}